\newtheorem{theorem}{Theorem}
\newtheorem{example}{Example}
\newenvironment{proof}[1][Proof]{\noindent\textbf{#1.} }{\ \rule{0.5em}{0.5em}}
\begin{document}

\title{Streaming problems as (multi-issue) claims problems\thanks{%
Financial support from grants PID2020-113440GBI00 and PID2023-146364NB-I00,
funded by MCIN/AEI/ 10.13039/501100011033 and
MICIU/AEI/10.13039/501100011033/ respectively, and by FEDER, and UE is
gratefully acknowledged.}}
\author{\textbf{Gustavo Berganti\~{n}os}\thanks{%
ECOBAS, Universidade de Vigo, ECOSOT, 36310 Vigo, Espa\~{n}a} \\
\textbf{Juan D. Moreno-Ternero}\thanks{%
Department of Economics, Universidad Pablo de Olavide, 41013 Sevilla, Espa%
\~{n}a; jdmoreno@upo.es}}
\maketitle

\begin{abstract}
We study the problem of allocating the revenues raised via paid
subscriptions to music streaming platforms among participating artists. We
show that the main methods to solve streaming problems (pro-rata,
user-centric and families generalizing them) can be seen as specific
(well-known) rules to solve (multi-issue) claims problems. Our results
permit to provide strong links between the well-established literature on
claims problems and the emerging literature on streaming problems.
\end{abstract}

\bigskip

\noindent \textbf{\textit{JEL numbers}}\textit{: D63, C71, L82, O34.}%
\medskip {} 

\noindent \textbf{\textit{Keywords}}\textit{: Streaming, revenue allocation,
multi-issue claims problems, pro-rata, user-centric.}

\bigskip

\bigskip

\newpage

\bigskip

\bigskip

\section{Introduction}

The problem of adjudicating conflicting claims (in short, claims problem)
models a basic situation in which an endowment is allocated among agents who
have claims on it, and the available amount is not enough to fully honor all
claims. This is a classic problem that can be traced back to ancient
sources, such as Aristotle and the Talmud, although its formal treatment is
somewhat recent (e.g., O'Neill, 1982). A sizable literature at the
intersection of economics and operations research emerged afterwards (see,
for instance, Thomson (2019) and the literature cited therein). Some of the
efforts within that literature have been devoted to generalize claims
problems to account for multiple issues (e.g., Calleja et al., 2005; Ju et
al., 2007).\footnote{%
Acosta et al., (2022, 2023) have recently extended this multidimensional
setting to allow for crossed claims.} 

From a different vantage point, there has been a growing interest in recent
years to analyze platform businesses, which have transformed the ways in
which cultural content is produced and consumed (e.g., Aguiar et al., 2024). 
This is particularly the case with music. According to Statista, in the
third quarter of 2024, Spotify (the largest music streaming platform)
reached an all-time high with 640 million active users worldwide. 
The problem of sharing the revenue raised from paid subscriptions to
streaming platforms among artists, a new form of revenue sharing problems
under bundled pricing (e.g., Adams and Yellen, 1976; Ginsburgh and Zang,
2003; Berganti\~{n}os and Moreno-Ternero, 2015, 2016) is of critical
importance to understand the management of music platforms. We show in this
paper that such a problem can actually be approached resorting to the
literature on (multi-issue) claims problems mentioned above.%

Traditionally, music platforms used a \textit{pro-rata} method to award
artists based on the relative number of total streamings they achieved in
the platform. Gradually, they have been moving to a decentralized method in
which the amount paid by each user is allocated among the artists that user
streamed. If the latter allocation is proportional to individual streamings,
we obtain the so-called \textit{user-centric} method. If, instead, the
allocation does not distinguish among artists that were streamed, we obtain
the so-called \textit{Shapley} method. Varying the allocation method, we
obtain the family of \textit{probabilistic} methods (e.g., Berganti\~{n}os
and Moreno-Ternero, 2025). And compromising between the \textit{pro-rata}
and \textit{user-centric} methods we obtain the family of \textit{weighted}
methods (e.g., Berganti\~{n}os and Moreno-Ternero, 2025).

We show in this paper that all the methods described above can actually be
derived from existing rules to solve (multi-issue) claims problems. Key
among the latter are the so-called \textit{two-stage} rules, which first
divide the endowment among issues, and then divide the amount assigned to
each issue among the agents. More precisely, we show that the family of
probabilistic indices induces the same rewards as the family of two-stage
rules where we use the classical \textit{constrained equal awards} rule in the first
stage, and any claims rule (satisfying two basic requirements) for the
second stage. Similarly, we show that the family of user-weighted indices
induces the same rewards as the family of two-stage rules where we use the classical 
\textit{proportional} rule in the second stage, and any claims rule
(satisfying a basic requirement) for the first stage.

Our work can thus be interpreted as providing extension operators to convert
rules to solve claims problems into rules to solve streaming problems. As
such, we align with the literature dealing with operators on the space of
claims rules (e.g., Thomson and Yeh, 2008; Hougaard et al., 2012; 2013a;
2013b; Moreno-Ternero and Vidal-Puga, 2021). We also align with the
literature that applies the well-established framework from claims problems
to solve various related allocation problems with real-life implications. Recent instances are water conflicts (e.g., Mianabadi et al., 2014), linguistic subsidies (e.g., Ginsburgh and Moreno-Ternero, 2018),
broadcasting problems (e.g., Berganti\~{n}os and Moreno-Ternero, 2020), $%
CO_{2}$ emissions (e.g., Duro et al., 2020), slots allocation in sport competitions (e.g., Krumer and
Moreno-Ternero, 2020), or pollution abatement (e.g., Ju
et al., 2021).

The rest of the paper is organized as follows. In Section 2, we introduce
the basic aspects of claims problems. In Section 3, we introduce the basic
aspects of streaming problems. In Section 4, we provide the bridge between
both types of problems, as well as our main results. We conclude in Section
5.

\section{Claims problems}

Claims problems refer to an amount of a homogeneous and infinitely divisible
good (e.g., money) to be divided among a set of agents, who have claims on
the good that cannot be fully honored. See, for instance, Thomson (2019) for
an excellent recent survey on the topic.  A well-known example is bankruptcy. But many others
can also fit this framework. 
Formally, a \textbf{claims problem} is a triple $\left( N,c,E\right) $ where 
$N$ is the set of agents, $c\in \mathbb{R}_{+}^{N}$ is the claims vector and 
$E\in \mathbb{R}_{+}$ is the amount to be divided. It is assumed that $%
\sum\limits_{i\in N}c_{i}\geq E$. Let $\mathcal{C}$ denote the set of claims
problems.

A (claims) \textbf{rule} is a function $R$ assigning to each claims problem $%
\left( N,c,E\right)\in\mathcal{C} $ a vector $R\left( N,c,E\right) \in 
\mathbb{R}^{N}$ such that $\sum\limits_{i\in N}R_{i}\left( N,c,E\right) =E.$%
%
%

Some examples of well-known rules are presented next.

The \textbf{constrained equal awards rule} equalizes the amount received by
each agent as much as possible, subject to the constraint that no agent gets
an amount above his claim. Formally, for each $(N,c,E)\in\mathcal{C}$, and
each $i\in N$, 
\begin{equation*}
CEA_{i}(N,c,E)=\min \{\lambda ,c_{i}\},
\end{equation*}%
where $\lambda $ satisfies $\sum\limits_{i\in N}\min \{\lambda ,c_{i}\}=E$.

The \textbf{proportional rule} yields awards proportionally to claims.
Formally, for each $(N,c,E)\in\mathcal{C}$, and each $i\in N$, 
\begin{equation*}
P_{i}(N,c,E)=\frac{c_{i}}{\sum\limits_{j\in N}c_{j}}\,E.\text{ }
\end{equation*}

The \textbf{weighted proportional rule} yields awards proportionally to
weighted claims. Formally, given weights $\left(w_{i}\right) _{i\in N}$, for
each $(N,c,E)\in\mathcal{C}$ and each $i\in N,$ 
\begin{equation*}
P_{i}^{w}(N,c,E)=\frac{w_{i}c_{i}}{\sum\limits_{j\in N}w_{j}c_{j}}\,E.\text{ 
}
\end{equation*}


A \textbf{multi-issue claims problem} (e.g., Ju et al., 2007) is a tuple $%
\left( N,K,c,E\right) $ where $N$ is a set of agents, $K$ is a set of
issues, $c=(c_{ij})_{i\in N,j\in K}\ $where for all $i\in N$ and $j\in K,$ $%
c_{ij}\geq 0$ denotes the characteristic of agent $i$ on issue $j$, and $%
E\in \mathbb{R}_{+}$ is the amount of a homogeneous and infinitely divisible
good to be divided.\footnote{%
Calleja et al. (2005) introduce \textit{multi-issue allocation situations},
which are a particular case of multi-issue claims problems.} 
Let $\mathcal{MC}$ denote the set of multi-issue claims problems. For each $%
\left( N,K,c,E\right)\in \mathcal{MC}$ and each $j\in K$, let $%
c_{.j}=(c_{ij})_{i\in N}$ and $C^{j}=\sum\limits_{i\in N}c_{ij}.$

A (multi-issue claims) \textbf{rule} is a function $R$ assigning to each
multi-issue claims problem $\left( N,K,c,E\right)\in \mathcal{MC} $ a vector 
$R\left( N,K,c,E\right) \in \mathbb{R}^{N}.$

A well-known family of rules is defined as follows. First, a weight function is a function $\omega :\mathbb{R}_{+}^{K}\times \mathbb{R}%
_{+}\rightarrow \mathbb{R}_{+}^{K}$, which assigns a probability
distribution $\omega (x,y)$; namely, $0\leq w_{j}\left( x,y\right) \leq 1$
for all $j\in K$ and $\sum\limits_{j\in K}w_{j}\left( x,y\right) =1$.
The \textbf{(multi-issue) weighted proportional rule} associated with $w:%
\mathbb{R}_{++}^{K}\times \mathbb{R}_{++}\rightarrow \mathbb{R}^{K}$ assigns
for each problem $\left( N,K,c,E\right) $ and each $i\in N$, 
\begin{equation*}
P_{i}^{w}\left( N,K,c,E\right) =\sum_{j\in K}\frac{c_{ij}}{C^{j}}\omega
_{j}(\left( C^{j}\right) _{j\in K},E)E.
\end{equation*}


Alternatively, the literature has also considered two-stage rules, where in
the first stage the endowment is divided among the issues, and in the second
stage the amount assigned to each issue is divided among the agents. 
The final amount received by each agent is the sum over all issues.
Formally, let $\psi $ and $\left\{ \phi ^{j}\right\} _{j\in K}$ be some
claims rules. The\textbf{\ two-stage rule} $R^{\psi ,\phi }$ is the rule
obtained from the following two-stage procedure:

\begin{enumerate}
\item First stage. We consider the claims problem among the issues $%
(K,c^{K},E)$, where $c^{K}=(c_{j}^{K})_{j\in K}$ and for each $j\in K$, $%
c_{j}^{K}=\sum\limits_{i\in N}c_{ij}$ . We compute $\psi (K,c^{K},E).$

\item Second stage. For each $j\in K$, we consider the claims problem $%
(N,c_{.j},\psi _{j}\left( K,c^{K},E\right) )$. We compute $\phi
^{j}(N,c_{.j},\psi _{j}\left( K,c^{K},E\right) )$.
\end{enumerate}

For each $i\in N$ we define, 
\begin{equation*}
R_{i}^{\psi ,\phi }(N,K,C,E)=\sum_{j\in K}\phi _{i}^{j}\left( N,c_{.j},\psi
_{j}\left( K,c^{K},E\right) \right) .
\end{equation*}

Moreno-Ternero (2009) and Berganti\~{n}os et al. (2010) study the two-stage
rule where the proportional rule is used in both stages (namely, $\psi =\phi
^{j}=P$ for all $j\in K)$.\footnote{See also M\'{a}rmol and Hinojosa (2020).} Lorenzo-Freire et al. (2010) and Berganti\~{n}os
et al. (2011) study the two-stage rule where the constrained equal awards
rule is used in both stages (namely, $\psi =\phi ^{j}=CEA$ for all $j\in K).$
Berganti\~{n}os et al. (2018) study the two-stage rule where the constrained
equal awards rule is used in the first stage and the proportional rule in
the second stage (namely, $\psi =CEA$ and $\phi ^{j}=P$ for all $j\in K).$




\section{Streaming problems}

We now consider streaming problems, as introduced by Berganti\~{n}os and
Moreno-Ternero (2025). Let $\mathbb{N}$ represent the set of all potential
artists and $\mathbb{M}$ the set of all potential users (of music streaming
platforms). We can assume, without loss of generality, that both $\mathbb{N}$
and $\mathbb{M}$ are sufficiently large. In particular, we assume that $|%
\mathbb{M}|\geq 3$. Each specific platform involves a specific (finite) set
of artists $N\subset \mathbb{N}$ and a specific (finite) set of users $%
M\subset \mathbb{M}$, with cardinalities $n$ and $m$, respectively. For ease
of notation, we typically assume that $N=\left\{ 1,...,n\right\} $ and $%
M=\left\{ 1,...,m\right\} $. For each pair $i\in N,j\in M$, let $t_{ij}$
denote the times user $j$ played (via streaming) contents uploaded by artist 
$i$ in a platform (briefly, \textit{streams}), during a certain period of
time (e.g., month). In most of the platforms, playing a streaming unit will
be equivalent to playing a song (for at least 30 seconds). Let $t=\left(
t_{ij}\right) _{i\in N,j\in M}$ denote the corresponding matrix encompassing
all streams. We assume that for each $j\in M,$ $\sum\limits_{i\in N}t_{ij}>0$
(namely, each user has streamed some content).

A \textbf{streaming problem} is a triple $P=\left( N,M,t\right) $. Following
Berganti\~{n}os and Moreno-Ternero (2025), we normalize the amount paid by
each user to $1$. Thus, the amount to be divided among artists in a problem $%
\left( N,M,t\right) $ is just $m$, the number of users. The set of problems
so defined is denoted by $\mathcal{P}$.

For each $j\in M,$ we denote by $t^{-j}$ the matrix obtained from $t$ by
removing the column corresponding to user $j$. For each artist $i\in N$, $%
T_{i}\left( N,M,t\right) =\sum\limits_{j\in M}t_{ij}$, denotes the total
times $i$ was streamed. Likewise, for each user $j\in M$, $T^{j}\left(
N,M,t\right) =\sum\limits_{i\in N}t_{ij}$, denotes the total times $j$
streamed content. Notice that, by assumption, $T^{j}\left( N,M,t\right) >0$.

We define the set of fans of each artist as the set of users who have
streamed content from the artist at least once. Formally, for each $i\in N,$ 
$F_{i}\left( N,M,t\right) =\left\{ j\in M:t_{ij}>0\right\} $. Similarly, we
define the list of artists of a user as those from which the user has
streamed content at least once. Formally, for each $j\in M,$ $L^{j}\left(
N,M,t\right) =\left\{ i\in N:t_{ij}>0\right\} $. The profile of user $j$ is
defined as the streaming vector associated to such a user. Namely, $%
t_{.j}\left( N,M,t\right) =\left( t_{ij}\right) _{i\in N}$. When no
confusion arises we write $T_{i}$ instead of $T_{i}\left( N,M,t\right) $, $%
T^{j}$ instead of $T^{j}\left( N,M,t\right) ,$ $F_{i}$ instead of $%
F_{i}\left( N,M,t\right) ,$ $L^{j}$ instead of $L^{j}\left( N,M,t\right) ,$
and $t_{.j}$ instead of $t_{.j}\left( N,M,t\right) .$


A popularity \textbf{index} for streaming problems is a
mapping that measures the importance of each artist in each problem.
Formally, for each problem $\left( N,M,t\right) \in \mathcal{P}$, $I\left(
N,M,t\right) \in \mathbb{R}_{+}^{n}$ and, for each pair $i,j\in N$, $%
I_{i}\left( N,M,t\right) \geq I_{j}\left( N,M,t\right) $ if and only if $i$
is at least as important as $j$ at problem $\left( N,M,t\right) $. We assume
that $\sum\limits_{i\in N}I_{i}\left( N,M,t\right) >0$.

The reward received by each artist $i\in N$ from the revenues generated in
each problem ($m$ because the amount paid by each user has been normalized
to 1) is based on the importance of that artist in that problem. Formally, 
\begin{equation*}
R_{i}^{I}\left( N,M,t\right) =\frac{I_{i}\left( N,M,t\right) }{%
\sum\limits_{k\in N}I_{k}\left( N,M,t\right) }m.
\end{equation*}

Note that any positive linear transformation of a given index generates the
same allocation of rewards. Formally, for each $\lambda >0$ and each index $%
I $, $R^{\lambda I}\equiv R^{I}$. Thus, unless stated otherwise, we shall
slightly abuse language to identify an index with all its positive linear
transformations.



The index used by most platforms is the so-called \textbf{pro-rata} index,
which simply measures importance by the total number of streams. Formally,
for each problem $\left( N,M,t\right) \in \mathcal{P}$ and each artist $i\in
N,$ 
\begin{equation*}
P_{i}\left( N,M,t\right) =T_{i}=\sum_{j\in M}t_{ij}.
\end{equation*}

Thus, the amount received by each artist $i\in N$ under $P$ is

\begin{equation*}
R_{i}^{P}\left( N,M,t\right) =\frac{T_{i}}{\sum\limits_{k\in N}T_{k}}m.
\end{equation*}

Another index used in some platforms is the so-called \textbf{user-centric}
index. The amount paid by each user is divided among the artists streamed by
the user, proportionally to the total number of streams. Then, the index
assigns each artist the sum of the amounts obtained across users. Formally,
for each problem $\left( N,M,t\right) $ and each artist $i\in N,$ 
\begin{equation*}
U_{i}\left( N,M,t\right) =\sum_{j\in M}\frac{t_{ij}}{T^{j}}.
\end{equation*}%
It is easy to see that $R^{U}\left( N,M,t\right) =U\left( N,M,t\right) .$


The above two indices have been largely studied, for instance, in Alaei et al., (2022), Meyn et al. (2023), Gon\c{c}alves-Dosantos et al., (2024, 2025) and Berganti\~{n}os and Moreno-Ternero (2025).

We also consider the \textbf{Shapley} index, studied in Berganti\~{n}os and
Moreno-Ternero (2024) and Gon\c{c}alves-Dosantos et al., (2024).\footnote{%
The name is due to the fact that it corresponds with the Shapley value
(e.g., Shapley, 1953) of an associated cooperative game.} 
Formally, for each $\left( N,M,t\right) \in \mathcal{P}$ and each $i\in N$, 
\begin{equation*}
Sh_{i}\left( N,M,t\right) =\sum_{j\in M}\sum_{i\in L^{j}}\frac{1}{\left\vert
L^{j}\right\vert }.
\end{equation*}

Thus, under the Shapley index the subscription of each user $j$ is equally
allocated among the artists $j$ streamed. As in the case of user-centric,
the amount received by each artist from this index is precisely just what
the index indicates. Namely, 
\begin{equation*}
R^{Sh}\left( N,M,t\right) =Sh\left( N,M,t\right) .
\end{equation*}


We compare the three indices in the following example.

\begin{example}
\label{ex 2,3,a} Let $N=\left\{ 1,2,3\right\} ,$ $M=\left\{ 1,2\right\} ,$
and 
\begin{equation*}
t=\left( 
\begin{array}{cc}
10 & 0 \\ 
20 & 0 \\ 
0 & 70%
\end{array}%
\right) .
\end{equation*}


We then have the following: 
\begin{equation*}
\begin{tabular}{cccc}
& $R_{i}^{P}\left( N,M,t\right) $ & $R_{i}^{U}\left( N,M,t\right) $ & $%
R_{i}^{Sh}\left( N,M,t\right) $ \\ 
1 & 0.2 & 0.33 & 0.5 \\ 
2 & 0.4 & 0.66 & 0.5 \\ 
3 & 1.4 & 1 & 1%
\end{tabular}%
\end{equation*}
\end{example}

\bigskip


A weight system is a function $\omega :\mathbb{M}\times \mathbb{Z}%
_{+}^{n}\rightarrow \mathbb{R}$ such that for each $j\in \mathbb{M}$ and
each $x\in \mathbb{Z}_{+}^{n},$ $\omega \left( j,x\right) >0.$ For each
weight system $\omega $, its \textbf{weighted index} $I^{\omega }$ is
defined so that for each $\left( N,M,t\right) \in \mathcal{P}$ and each $%
i\in N,$ 
\begin{equation*}
I_{i}^{\omega }\left( N,M,t\right) =\sum_{j\in M}\omega \left(
j,t_{.j}\right) t_{ij}.
\end{equation*}

The value the index yields for each artist is obtained as the sum, over all
users, of the streams of the user-weighteded by a factor that depends on the
user and the streaming profile of the user. Both pro-rata and user-centric
are weighted indices. However, the Shapley index is not a weighted index
(e.g., Berganti\~{n}os and Moreno-Ternero, 2024; 2025).

A probability system is a function $\rho :\mathbb{M}\times \mathbb{Z}%
_{+}^{n}\rightarrow \mathbb{R}^{n}$ such that for each $j\in \mathbb{M}$ and
each $x\in \mathbb{Z}_{+}^{n},$ $0\leq \rho _{i}\left( j,x\right) \leq 1$, $%
\rho _{i}\left( j,x\right) =0$ when $x_{i}=0, $ and $\sum\limits_{i=1}^{n}%
\rho _{i}\left( j,x\right) =1.$ For each probability system $\rho $, its 
\textbf{probabilistic index }$I^{\rho }$ is defined so that for each $\left(
N,M,t\right) \in \mathcal{P}$ and each $i\in N,$ 
\begin{equation*}
I_{i}^{\rho }\left( N,M,t\right) =\sum_{j\in M}\rho _{i}\left(
j,t_{.j}\right) .
\end{equation*}

Then, the subscription of each user $j$ is fully allocated among the artists 
$j$ streamed (those that $j$ did not play get nothing) in an arbitrary way
that depends on $j$'s streaming vector. Both the user-centric and the
Shapley indices are probabilistic. However, the pro-rata index is not (e.g.,
Berganti\~{n}os and Moreno-Ternero, 2024; 2025). 


\section{A bridge between claims and streaming problems}

We could first associate with each streaming problem $\left( N,M,t\right)\in%
\mathcal{P}$ a claims problem in a trivial way. To wit, let the (overall)
amount paid by users be the amount to be divided. And let the claim of each
artist be its number of streams. Formally, we consider 
\begin{equation*}
\left( N,c\left( N,M,t\right) ,E\left( N,M,t\right) \right)\in\mathcal{C} ,
\end{equation*}%
where, for each $i\in N$, $c_{i}\left( N,M,t\right) =T_{i}$, and $E\left(
N,M,t\right) =m.$ As no (streaming) rule depending on $t_{ij}$
can be computed in this way, we  
associate instead with each streaming problem $\left( N,M,t\right)\in%
\mathcal{P} $ a multi-issue claims problem. Namely, 
\begin{equation*}
\left( N,K\left( N,M,t\right) ,c\left( N,M,t\right) ,E\left(
N,M,t\right)\right) \in\mathcal{MC} ,
\end{equation*}%
where $K\left( N,M,t\right) =M,$ $c\left( N,M,t\right) =t$, and $E\left(
N,M,t\right) =m.$ When no confusion arises, we write $\left( N,K,c,E\right) $
instead of $\left( N,K\left( N,M,t\right) ,c\left( N,M,t\right) ,E\left(
N,M,t\right) \right) .$

Notice that the set of issues is the set of users, the characteristic of
agent $i$ on issue $j$ is just the number of streams of user $j$ on artist $%
i,$ and the amount to be divided is the amount paid by the users (which
coincides with the number of issues because of the normalization). Thus, the
set of (multi-issue) claims problems associated with streaming problems
corresponds with the ones where $c_{ij}\in \mathbb{Z}_{+}^{n}$ for all $i\in
N$ and $j\in M,$ and $E=\left\vert K\right\vert .$


With the previous association, two-stage rules have a natural interpretation
in streaming problems. In the first stage, we decide the importance of each
user in the final allocation (taking into account the total number of
streams of each user). We allow, for instance, that more active users have
more importance. In the second stage, we decide, for each user, the
importance of each artist (taking into account the streams of the user).



\subsection{Results for individual indices}

In our first result, we provide the connections between the three basic
indices presented above and claims rules. More precisely, we show that the
rewards associated with the pro-rata and user-centric indices belong to the
family of multi-issue weigthed proportional rules, whereas this is not the
case for the Shapley index. We also show that the rewards associated with
the three indices can be seen as two-stage rules associated with the two
well-known (unidimensional) claims rules: the proportional rule and the
constrained equal awards rule.


\begin{theorem}
\label{th single rules} Let $\left( N,M,t\right)\in\mathcal{P} $ be a
streaming problem and $\left( N,K,c,E\right)\in\mathcal{MC} $ be the
associated (multi-issue) claims problem. Then, the following statements hold:

\begin{itemize}
\item There exists a weight function $w^{P}$ such that $R^{P}\left(
N,M,t\right)=P^{w^{P}}\left( N,K,c,E\right)$. 

\item There exists a weight function $w^{U}$ such that $U\left(
N,M,t\right)=P^{w^{U}}\left( N,K,c,E\right)$.

\item There does not exist a weight function $w^{S}$ such that $S\left(
N,M,t\right)=P^{w^{S}}\left( N,K,c,E\right)$.

\item $R^{P}\left( N,M,t\right) =R^{P,P}\left( N,K,c,E\right) .$

\item $U\left( N,M,t\right) =R^{CEA,P}\left( N,K,c,E\right) .$

\item $Sh\left( N,M,t\right) =R^{CEA,CEA}\left( N,K,c,E\right).$
\end{itemize}
\end{theorem}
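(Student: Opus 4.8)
The plan is to treat the six assertions in three groups, exploiting throughout the dictionary supplied by the association: $c_{ij}=t_{ij}$, hence $C^{j}=\sum_{i\in N}c_{ij}=T^{j}$ for every $j\in K=M$, and $E=m=\left\vert K\right\vert$; I will also repeatedly use that $\sum_{j\in M}T^{j}=\sum_{i\in N}T_{i}$. The two representation statements are verified by exhibiting explicit weights, the impossibility by a structural obstruction witnessed on Example~\ref{ex 2,3,a}, and the three two-stage identities by computing both stages and summing over issues.

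For the first two bullets, recall that for any weight function $w$, $P_{i}^{w}(N,K,c,E)=\sum_{j\in K}\frac{t_{ij}}{T^{j}}\,w_{j}((T^{j})_{j},m)\,m$. For pro-rata I would take $w_{j}^{P}(x,y)=x_{j}/\sum_{j'}x_{j'}$, so that $w_{j}^{P}=T^{j}/\sum_{j'}T^{j'}$; the factors $T^{j}$ cancel and the sum collapses to $\frac{\sum_{j}t_{ij}}{\sum_{j'}T^{j'}}\,m=\frac{T_{i}}{\sum_{k}T_{k}}\,m=R_{i}^{P}$. For user-centric I would take the uniform weight $w_{j}^{U}\equiv 1/\left\vert K\right\vert=1/m$, so the factor $w_{j}^{U}m=1$ and the sum collapses to $\sum_{j}\frac{t_{ij}}{T^{j}}=U_{i}$. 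Both choices are genuine probability distributions, hence admissible weight functions.

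The third bullet requires a non-computational idea rather than a formula. The key observation is that, for a fixed issue $j$, the weighted proportional rule allocates issue $j$'s share among the agents strictly in proportion to $c_{ij}=t_{ij}$, while the weight $\omega_{j}$ only rescales the entire contribution of issue $j$; hence, for any two agents with positive streams under the same user $j$, their total rewards stand in the fixed ratio $t_{ij}:t_{i'j}$, \emph{independently of the chosen weights}. I would then invoke Example~\ref{ex 2,3,a}: under user $1$ the artists $1$ and $2$ are forced into the ratio $10:20=1:2$ by every weight function, whereas the Shapley index splits user $1$'s unit equally and assigns them $0.5:0.5$. Thus no weight function can reproduce the Shapley rewards already on this single problem, which refutes the existence of $w^{S}$.

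For the last three bullets I would compute the two stages explicitly. With the proportional rule in the first stage, $\psi_{j}=\frac{T^{j}}{\sum_{j'}T^{j'}}\,m$; a proportional second stage then gives $\phi_{i}^{j}=\frac{t_{ij}}{T^{j}}\psi_{j}$, and summing over $j$ reproduces $R_{i}^{P}$, giving bullet four. The pivotal, and most delicate, step underlies the two CEA-first-stage identities: since each stream is a nonnegative integer and every user streamed something, each first-stage claim satisfies $T^{j}\geq 1$, so the perfectly equal division assigning $1$ to every issue is already feasible and exhausts the endowment $E=m$; consequently $CEA$ yields $\psi_{j}=1$ for all $j$. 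Feeding $\psi_{j}=1$ into a proportional second stage gives $\phi_{i}^{j}=t_{ij}/T^{j}$, whose sum over $j$ is exactly $U_{i}$ (bullet five). Feeding it into a $CEA$ second stage on $(N,c_{.j},1)$ gives, again by integrality ($t_{ij}\geq 1$ whenever positive, so the cap $1/\left\vert L^{j}\right\vert\leq t_{ij}$ never binds downward), the award $\frac{1}{\left\vert L^{j}\right\vert}$ to each streamed artist and $0$ otherwise; summing over $j$ yields $\sum_{j:\,t_{ij}>0}\frac{1}{\left\vert L^{j}\right\vert}=Sh_{i}$ (bullet six). The main obstacle throughout is precisely this integrality-driven collapse of the first-stage $CEA$ to the uniform unit vector: it is what forces the second stage to operate with endowment exactly $1$ per user, and without it the user-centric and Shapley identities would break.
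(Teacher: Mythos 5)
Your proof is correct, and for five of the six items it follows essentially the same path as the paper: the same explicit weights $w_{j}^{P}=C^{j}/\sum_{j'}C^{j'}$ and $w_{j}^{U}=1/\left\vert K\right\vert$ for the first two bullets, and the same two-stage computations for the last three, including the key observation (which the paper leaves as ``straightforward'') that integrality of streams plus $T^{j}\geq 1$ forces the first-stage $CEA$ to award exactly $1$ per user, and likewise that $1/\left\vert L^{j}\right\vert\leq t_{ij}$ for streamed artists in the second-stage $CEA$.

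Where you genuinely diverge is the third bullet. The paper's route is axiomatic: every multi-issue weighted proportional rule satisfies \emph{reallocation-proofness} (Ju et al., 2007), and the Shapley index violates it, witnessed by a \emph{pair} of single-user problems $t=(20,0,10)$, $t'=(10,10,10)$ with $S=\{1,2\}$, where $\sum_{i\in S}Sh_{i}$ changes from $1/2$ to $2/3$ despite $\sum_{i\in S}c_{i}$ being fixed. Your route is a single-problem obstruction: in Example~\ref{ex 2,3,a}, artists $1$ and $2$ draw their entire reward from issue $1$ alone, so any $P^{w}$ must award them in the fixed ratio $10:20$, while Shapley awards $0.5:0.5$. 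This is more elementary (no appeal to an external invariance property) and in fact proves something slightly stronger: the Shapley allocation on that specific problem lies outside the range of $\{P^{w}\}_{w}$ pointwise, not merely that $Sh$ fails as a rule across problems. The paper's version, in exchange, situates the result within a known axiom and makes transparent \emph{which} structural property separates $Sh$ from the weighted proportional family. One caution: your intermediate claim as written --- that any two agents with positive streams under a common user have total rewards in the ratio $t_{ij}:t_{i'j}$ independently of the weights --- is false in general, since an agent streamed by several users accumulates rewards across issues carrying different weights. It holds in your application only because artists $1$ and $2$ in the example are streamed by user $1$ and by no one else ($t_{12}=t_{22}=0$); you should add that hypothesis explicitly. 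With that qualifier inserted, the argument is complete.
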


\bigskip

\begin{proof}
For the first item, we consider $w^{P}:\mathbb{R}_{++}^{K}\times \mathbb{R}%
_{++}\rightarrow \mathbb{R}^{K}$ where for each $\left( \left( C^{j^{\prime
}}\right) _{j^{\prime }\in K},E\right) $, 
\begin{equation*}
\omega _{j}^{P}\left( \left( C^{j^{\prime }}\right) _{j^{\prime }\in
K},E\right) =\frac{C^{j}}{\sum\limits_{j^{\prime }\in K}C^{j^{\prime }}}%
\text{ for each }j\in M.
\end{equation*}

Then, for each $i\in N,$ 
\begin{equation*}
P_{i}^{\omega ^{P}}\left( N,K,c,E\right) =\sum_{j\in M}\frac{t_{ij}}{T^{j}}%
\frac{T^{j}}{\sum\limits_{j^{\prime }\in K}T^{j^{\prime }}}m=\frac{%
\sum\limits_{j\in M}t_{ij}}{\sum\limits_{j^{\prime }\in K}T^{j^{\prime }}}m=%
\frac{T_{i}}{\sum\limits_{i^{\prime }\in N}T_{i^{\prime }}}m=R_{i}^{P}\left(
N,M,t\right) .
\end{equation*}


For the second item, we consider $w^{U}:\mathbb{R}_{++}^{K}\times \mathbb{R}%
_{++}\rightarrow \mathbb{R}^{K}$ where for each $\left( \left( C^{j^{\prime
}}\right) _{j^{\prime }\in K},E\right) $, 
\begin{equation*}
\omega _{j}^{U}\left( \left( C^{j^{\prime }}\right) _{j^{\prime }\in
K},E\right) =\frac{1}{\left\vert K\right\vert }\text{ for each }j\in M.
\end{equation*}

Then, for each $i\in N,$ 
\begin{equation*}
P_{i}^{\omega ^{U}}\left( N,K,c,E\right) =\sum_{j\in M}\frac{t_{ij}}{T^{j}}%
\frac{1}{m}m=\sum_{j\in M}\frac{t_{ij}}{T^{j}}=U_{i}\left( N,M,t\right) .
\end{equation*}


For the third item, we note that each (multi-issue) weighted proportional
rule satisfies \textit{reallocation-proofnes} (e.g., Ju el al., 2007). That
is, for each $S\subset N$ such that $\sum\limits_{i\in
S}c_{ij}=\sum\limits_{i\in S}c_{ij}^{\prime },$ we have that $%
\sum\limits_{i\in S}R_{i}\left( N,K,c,E\right) =\sum\limits_{i\in
S}R_{i}\left( N,K,c^{\prime },E\right)$. Let $\left( N,M,t\right) $ and $%
\left( N,M,t^{\prime }\right) $ be such that $M=\left\{ j\right\} .$ The
associated claims problems are $\left( N,K,c,E\right) $ and $\left(
N,K,c^{\prime },E\right) $ where $K=\left\{ j\right\} $, $E=1$, $c=t$ and $%
c^{\prime }=t^{\prime }$. Assume that $N=\left\{ 1,2,3\right\} ,$ $t=\left(
20,0,10\right) $, $t^{\prime }=\left( 10,10,10\right) ,$ and $S=\left\{
1,2\right\} .$ Then, 
\begin{equation*}
\sum\limits_{i\in S}Sh_{i}\left( N,M,t\right) =\frac{1}{2}\text{ and }%
\sum\limits_{i\in S}Sh_{i}\left( N,M,t^{\prime }\right) =\frac{2}{3}.
\end{equation*}

Thus, $Sh$ does not satisfy reallocation proofness and, hence, it cannot be
associated to a (multi-issue) weighted proportional rule. 

For the fourth item, let $i\in N$ be given and compute $R_{i}^{P,P}\left(
N,K,c,E\right)$.

First stage. For each $j\in K,$ 
\begin{equation*}
P_{j}(K,c^{K},E)=\frac{T^{j}}{\sum\limits_{j^{\prime }\in K}T^{j^{\prime }}}%
m.
\end{equation*}

Second stage. 
\begin{equation*}
P_{i}(N,c_{.j},P_{j}\left( K,c^{K},E\right) )=\frac{t_{ij}}{T^{j}}\frac{T^{j}%
}{\sum\limits_{j^{\prime }\in M}T^{j^{\prime }}}m=\frac{t_{ij}}{%
\sum\limits_{j^{\prime }\in M}T^{j^{\prime }}}m.
\end{equation*}

Thus, for each $i\in N$, 
\begin{eqnarray*}
R_{i}^{P,P}(R,N,E,C) &=&\sum_{j\in K}P_{i}(N,c_{.j},P_{j}\left(
K,c^{K},E\right) ) \\
&=&\sum_{j\in M}\frac{t_{ij}}{\sum\limits_{j^{\prime }\in M}T^{j^{\prime }}}%
m=\frac{T_{i}}{\sum\limits_{i^{\prime }\in N}T_{i^{\prime }}}%
m=R_{i}^{P}\left( N,M,t\right) .
\end{eqnarray*}

For the fourth item, let $i\in N$ be given and compute $R_{i}^{CEA,P}\left(
N,K,c,E\right) .$

First stage. For each $j\in K,$ $CEA_{j}(K,c^{K},E)=\min \left\{ \lambda
,T^{j}\right\} $ where $\sum\limits_{j\in K}\min \left\{ \lambda
,T^{j}\right\} =E.$ It is straightforward to check that $\lambda =1.$ Hence, 
$CEA_{j}(K,c^{K},E)=1.$

Second stage. 
\begin{equation*}
P_{i}(N,c_{.j},CEA_{j}(K,c^{K},E))=\frac{t_{ij}}{T^{j}}.
\end{equation*}

Thus, for each $i\in N$, 
\begin{eqnarray*}
R_{i}^{CEA,P}(R,N,E,C) &=&\sum_{j\in K}P_{i}(N,c_{.j},CEA_{j}(K,c^{K},E))) \\
&=&\sum_{j\in M}\frac{t_{ij}}{T^{j}}=U_{i}\left( N,M,t\right) .
\end{eqnarray*}

For the sixth item, let $i\in N$ be given and compute $R_{i}^{CEA,CEA}%
\left(N,K,c,E\right)$.

First stage. Similarly to the previous item, we obtain that, for each $j\in
K $, $CEA_{j}(K,c^{K},E)=1.$

Second stage. 
\begin{equation*}
CEA_{i}(N,c_{.j},CEA_{j}(K,c^{K},E))=\left\{ 
\begin{tabular}{cc}
$\frac{1}{\left\vert L^{j}\right\vert }$ & if $i\in L^{j}$ \\ 
0 & otherwise%
\end{tabular}%
\right. .
\end{equation*}

Thus, for each $i\in N$, 
\begin{eqnarray*}
R_{i}^{CEA,CEA}(R,N,E,C) &=&\sum_{j\in
K}CEA_{i}(N,c_{.j},CEA_{j}(K,c^{K},E))) \\
&=&\sum_{j\in M}\sum_{i\in L^{j}}\frac{1}{\left\vert L^{j}\right\vert }%
=Sh_{i}\left( N,M,t\right) .
\end{eqnarray*}
\end{proof}

\bigskip

The last three items of Theorem $\ref{th single rules}$ allow us to consider
the three indices from a different perspective. The allocation rules they
induce (for streaming problems) can actually be described as two-stage
(claim) rules where we first decide the importance of each user and then the
importance of each artist for each user. The user-centric and Shapley
indices measure the importance of all users equally (applying the $CEA$
rule) whereas pro-rata measures the importance of each user proportionally
to the number of streams (applying the proportional rule). To measure the
importance of each artist for each user, pro-rata and user-centric do it
proportionally to the artists' streams (applying the proportional rule),
whereas the Shapley index states that all artists streamed by the user have
the same importance (applying the $CEA$ rule).

\bigskip

Theorem $\ref{th single rules}$ considers three possible two-stage rules
arising from combining the $CEA$ and $P$ rules in different ways. The other
combination between these two rules that is not captured in Theorem $\ref{th
single rules}$ is $R^{P,CEA}$, which we now illustrate. Let $\left(
N,M,t\right)\in\mathcal{P}$ be a streaming problem and $\left(
N,K,c,E\right)\in\mathcal{MC}$ be the associated (multi-issue) claims
problem. For each $i\in N$, 
\begin{eqnarray*}
R_{i}^{P,CEA}(R,N,E,C) &=&\sum_{j\in K}CEA_{i}(N,c_{.j},P_{j}\left(
K,c^{K},E\right) ) \\
&=&\sum_{j\in M}\min \left\{ \lambda _{j},t_{ij}\right\} .
\end{eqnarray*}


We illustrate this rule in the following example.

\begin{example}
\label{ex 3,2,a} Let $N=\left\{ 1,2\right\} ,$ $M=\left\{ 1,2,3\right\} ,$
and 
\begin{equation*}
t=\left( 
\begin{array}{ccc}
1 & 1 & 1 \\ 
1 & 1 & 95%
\end{array}%
\right) .
\end{equation*}

First stage. 
\begin{equation*}
P(K,c^{K},E)=(0.06,0.06,2.88).
\end{equation*}

Second Stage. Given $j\in \left\{ 1,2\right\} ,$ $\lambda _{j}=0.03$ and 
\begin{equation*}
CEA(N,c_{.j},P_{j}\left( K,c^{K},E\right) )=\left( 0.03,0.03\right) .
\end{equation*}

Given $j=3,$ $\lambda _{j}=1.88$ and 
\begin{equation*}
CEA(N,c_{.j},P_{j}\left( K,c^{K},E\right) )=\left( 1,1.88\right) .
\end{equation*}

Then, 
\begin{equation*}
R^{P,CEA}(R,N,E,C)=\left( 1.06,1.94\right) .
\end{equation*}
\end{example}



\subsection{Results for families of indices}

We now move from the three specific indices considered above to the two
families of indices encompassing pairs of them.

We consider first the probabilistic indices (encompassing the user-centric
and Shapley indices). As the next result states, they are connected to
two-stage rules where the constrained equal awards rule is used in the first
stage. 


To present the result, we need some notation first. Let $\rho $ be a
probabilistic system. For each $j\in M$ we can define the claims rule $\rho
^{j}$ induced by the probabilistic system $\rho $ as follows. For each $%
\left( N,c,E\right)\in\mathcal{C}$, and each $i\in N$, we define $\rho
_{i}^{j}\left( N,c,E\right) =\rho _{i}\left( j,c\right) E.$

We say that a claims rule $R$ satisfies \textit{non-negativity} and \textit{%
dummy} if for each $\left( N,c,E\right) \in \mathcal{C}$ and each $i\in N$, $%
R_{i}\left( N,c,E\right) \geq 0$ and $R_{i}\left( N,c,E\right) =0$ when $%
c_{i}=0.$ 

Let $\left( N,M,t\right)\in\mathcal{P} $ be a streaming problem and $\left(
N,K,c,E\right)\in\mathcal{MC} $ be the associated (multi-issue) claims
problem. We define the set of rewards induced by probabilistic indices as 
\begin{equation*}
RPI\left( N,M,t\right) =\left\{ x\in \mathbb{R}^{n}:x=I^{\rho }\left(
N,M,t\right) \text{ for some probabilistic index }I^{\rho }\right\} .
\end{equation*}

We define the set of allocations induced by two-stage rules $R^{\psi ,\phi }$
where $\psi =CEA$ and, for each $j\in K$, $\phi ^{j}$ is a claims rule that
satisfies non-negativity and dummy, as 
\begin{equation*}
CND\left( N,K,c,E\right) =\left\{ 
\begin{array}{c}
x\in \mathbb{R}^{n}:x=R^{CEA,\phi }\left( N,K,c,E\right) \text{ } \\ 
\text{where for each }j\in K,\text{ }\phi ^{j}\text{ satisfies
non-negativity and dummy}%
\end{array}%
\right\}
\end{equation*}

\begin{theorem}
\label{th prob index}Let $\left( N,M,t\right)\in\mathcal{P} $ be a streaming
problem and $\left( N,K,c,E\right)\in\mathcal{MC} $ be the associated
(multi-issue) claims problem. Then, the following statements hold:

\begin{itemize}
\item For each probabilistic index $I^{\rho }$, $I^{\rho }\left(
N,M,t\right) =$ $R^{CEA,\rho }\left( N,K,c,E\right) $ where $\rho =\left\{
\rho ^{j}\right\} _{j\in K}.$

\item $RPI\left( N,M,t\right) =CND\left( N,K,c,E\right) .$
\end{itemize}
\end{theorem}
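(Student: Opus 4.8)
The plan is to derive the first item by a direct two-stage computation and then to read off both inclusions of the second item from it. The one structural fact that drives everything, already observed in the proof of Theorem~\ref{th single rules}, is the rigidity of the first stage: since $c^{K}=(T^{j})_{j\in K}$ with each $T^{j}\geq 1$ (streams are integers and every user streams at least once) and $E=m=|K|$, the equation $\sum_{j\in K}\min\{\lambda,T^{j}\}=m$ is solved by $\lambda=1$, so $CEA_{j}(K,c^{K},E)=1$ for every $j\in K$. Thus every first stage appearing in this theorem simply hands each issue an endowment of exactly $1$. For the first item I would then compute $R^{CEA,\rho}$ directly: the first stage gives each issue endowment $1$, the second stage applies the induced rule $\rho^{j}$ to $(N,c_{.j},1)=(N,t_{.j},1)$, yielding $\rho_{i}^{j}(N,t_{.j},1)=\rho_{i}(j,t_{.j})$, and summing over $j\in K=M$ gives $\sum_{j\in M}\rho_{i}(j,t_{.j})=I_{i}^{\rho}(N,M,t)$, as claimed.

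For the inclusion $RPI\subseteq CND$, I would take any probabilistic index $I^{\rho}$ and use the first item to write $I^{\rho}(N,M,t)=R^{CEA,\rho}(N,K,c,E)$. It then remains only to check that each second-stage rule $\rho^{j}$ may be taken as a genuine claims rule satisfying non-negativity and dummy. Non-negativity is immediate from $\rho_{i}(j,c)\geq 0$; dummy follows from the defining property $\rho_{i}(j,c)=0$ whenever $c_{i}=0$; and $\sum_{i}\rho_{i}^{j}(N,c,E)=E\sum_{i}\rho_{i}(j,c)=E$ confirms $\rho^{j}$ is a claims rule on integer claims. The only bookkeeping point is that $\rho_{i}(j,\cdot)$ is defined on $\mathbb{Z}_{+}^{n}$, so I would extend $\rho^{j}$ to non-integer claims by any fixed rule respecting non-negativity and dummy (for instance $CEA$); since the two-stage procedure only ever evaluates $\phi^{j}$ at the integer vector $t_{.j}$, the extension does not affect the output and $I^{\rho}(N,M,t)\in CND$.

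For the reverse inclusion $CND\subseteq RPI$, I would run the same correspondence backwards. Given $x=R^{CEA,\phi}(N,K,c,E)$ with each $\phi^{j}$ satisfying non-negativity and dummy, I define a candidate probability system by $\rho_{i}(j,y):=\phi_{i}^{j}(N,y,1)$ for $j\in M$ and $y\in\mathbb{Z}_{+}^{n}$, extending to $j\in\mathbb{M}\setminus M$ by any default probability system (which never enters the sum over $M$). The three probability-system axioms are then precisely the three hypotheses on $\phi^{j}$: $\rho_{i}(j,y)\geq 0$ is non-negativity, $\rho_{i}(j,y)=0$ when $y_{i}=0$ is dummy, and $\sum_{i}\rho_{i}(j,y)=\sum_{i}\phi_{i}^{j}(N,y,1)=1$ is the claims-rule property, which together with non-negativity also forces $\rho_{i}(j,y)\leq 1$. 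With $\rho$ in hand, the first item gives $I^{\rho}(N,M,t)=R^{CEA,\rho}(N,K,c,E)$, and since $CEA_{j}=1$ makes $\rho_{i}(j,t_{.j})=\phi_{i}^{j}(N,t_{.j},1)$ agree termwise with the second stage of $R^{CEA,\phi}$, we conclude $I^{\rho}(N,M,t)=x$, so $x\in RPI$.

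I expect the main obstacle to be not any single computation but the careful bookkeeping that makes the dictionary legitimate: matching the three probability-system axioms exactly to the triple consisting of non-negativity, dummy, and awards summing to the endowment, and reconciling the integer domain $\mathbb{Z}_{+}^{n}$ of probability systems with the full real domain on which a claims rule must be defined. Once that dictionary is in place, both inclusions follow immediately from the rigidity $CEA_{j}=1$ of the first stage.
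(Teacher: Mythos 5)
Your proposal is correct and follows essentially the same route as the paper's proof: the rigidity $CEA_{j}(K,c^{K},E)=1$ drives the first item, and both inclusions of the second item come from the dictionary $\rho_{i}(j,y)\leftrightarrow\phi_{i}^{j}(N,y,1)$ matching the probability-system axioms to non-negativity, dummy, and efficiency. Your extra bookkeeping (extending $\rho^{j}$ from integer claims vectors to all of $\mathcal{C}$, and extending $\rho$ to users outside $M$) is a point the paper silently glosses over, and handling it explicitly is a small improvement rather than a different argument.
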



\begin{proof}
For the first item, given $i\in N,$ we compute $R_{i}^{CEA,\rho }\left(
N,K,c,E\right) .$

First stage. We know from the proof of Theorem \ref{th single rules} that
for each $j\in K,$ $CEA_{j}(K,c^{K},E)=1.$

Second stage. 
\begin{equation*}
\rho _{i}^{j}(N,c_{.j},CEA_{j}(K,c^{K},E))=\rho _{i}\left( j,c_{.j}\right) .
\end{equation*}

Thus, for each $i\in N$, 
\begin{equation*}
R_{i}^{CEA,\rho }\left( N,K,c,E\right) =\sum_{j\in K}\rho _{i}\left(
j,c_{.j}\right) =\sum_{j\in M}\rho _{i}\left( j,t_{.j}\right) =I_{i}^{\rho
}\left( N,M,t\right) .
\end{equation*}%
This proves the first item. Note that it also proves one implication of the
second item, i.e., 
$RPI\left( N,M,t\right) \subseteq CND\left( N,K,c,E\right) $. 

We now prove the second implication, i.e., $CND\left( N,K,c,E\right)
\subseteq RPI\left( N,M,t\right) $. Let $x\in \mathbb{R}^{n}$ be such that $%
x=R^{CEA,\phi }\left( N,K,c,E\right) $ where for each $j\in K,$ $\phi ^{j}\ $%
satisfies non-negativity and dummy. We now define the probability system $%
\rho :\mathbb{M}\times \mathbb{Z}_{+}^{n}\rightarrow \mathbb{R}^{n}$ as
follows. For each $j\in \mathbb{M}$ and each $y\in \mathbb{Z}_{+}^{n}$, 
\begin{equation*}
\rho _{i}\left( j,y\right) =\phi _{i}^{j}\left( N,y,1\right) .
\end{equation*}%
As $\phi ^{j}$ is a claims rule that satisfies non negativity and dummy, we
have that $0\leq \rho _{i}\left( j,y\right) ,$ $\rho _{i}\left( j,y\right) =0
$ when $y_{i}=0,$ and $\sum\limits_{i=1}^{n}\rho _{i}\left( j,y\right) =1.$
Thus, $\rho $ is a probabilistic system. Now, 
\begin{eqnarray*}
I_{i}^{\rho }\left( N,M,t\right)  &=&\sum_{j\in M}\rho _{i}\left(
j,t_{.j}\right) =\sum_{j\in M}\phi _{i}^{j}\left( N,t_{.j},1\right)
=\sum_{j\in K}\phi _{i}^{j}\left( N,c_{.j},CEA_{j}(K,c^{K},E)\right)  \\
&=&R_{i}^{CEA,\phi }\left( N,K,c,E\right) =x_{i}.
\end{eqnarray*}
\end{proof}

\bigskip

The second item of Theorem \ref{th prob index} says that the family of
probabilistic indices induces the same rewards as the family of two-stage
rules with the constrained equal awards rule in the first stage (any
claims rule satisfying non-negativity and dummy can be used for the second
stage).

\bigskip

We now shift our focus to the family of weighted (rather than probabilistic)
indices (which encompass the pro-rata and user-centric indices).

We first consider the sub-family made of 
weighted indices where the weight depends only on the user (and not on the
streams). Formally, given a weighted index $I^{w}$ we say that $I^{w}$ is a 
\textbf{user-weighted index} if for each $j\in M$ and each pair $x,$ $%
x^{\prime }\in $ $\mathbb{Z}_{+}^{n},$ $w\left( j,x\right) =w\left(
j,x^{\prime }\right) .$ Thus, for each $j\in M$, we can define $%
w_{j}=w\left( j,x\right) $ with $x\in \mathbb{Z}_{+}^{n}.$

We show next that each user-weighted index is associated with a two-stage
rule where in the first stage we consider a weighted proportional rule and
in the second stage we consider the (unweighted) proportional rule. Besides,
we also show that the allocations induced by the rewards associated with
user-weighted indices coincide with the allocations induced by two-stage
rules where a rule satisfying positivity is used in the first stage, and the
proportional rule is used in the second stage.

Formally, we say that a claims rule $R$ satisfies \textit{positivity} if,
for each $\left( N,c,E\right)\in\mathcal{C}$ and each $i\in N$ such that $%
c_{i}>0$, $R_{i}\left( N,c,E\right) >0$.

Let $\left( N,M,t\right)\in\mathcal{P} $ be a streaming problem and $\left(
N,K,c,E\right)\in\mathcal{MC} $ be the associated (multi-issue) claims
problem. We define the set of rewards induced by user-weighted indices as 
\begin{equation*}
UWI\left( N,M,t\right) =\left\{ x\in \mathbb{R}^{n}:x=R^{I^{w}}\left(
N,M,t\right) \text{ for some user-weighted index }I^{w}\right\} .
\end{equation*}

We define the set of allocations induced by two-stage rules $R^{\psi ,\phi }$
where $\psi $ is a rule satisfying positivity and $\phi ^{j}=P$ for each $%
j\in M$ as 
\begin{equation*}
PP\left( N,K,c,E\right) =\left\{ 
\begin{array}{c}
x\in \mathbb{R}^{n}:x=R^{\psi ,\phi }\left( N,K,c,E\right) \text{ where }%
\psi \text{ is a rule satisfying positivity} \\ 
\text{and for each }j\in K,\text{ }\phi ^{j}=P%
\end{array}%
\right\}
\end{equation*}

\begin{theorem}
\label{th user weigh index}Let $\left( N,M,t\right)\in\mathcal{P} $ be a
streaming problem and $\left( N,K,c,E\right)\in\mathcal{MC} $ be the
associated (multi-issue) claims problem. Then, the following statements hold:

\begin{itemize}
\item For each user-weighted index $I^{w}$, $R^{I^{w}}\left( N,M,t\right) =$ 
$R^{\psi ,\phi }\left( N,K,c,E\right) $ where $\psi =P^{w}$ (the weighted
proportional rule given by $\left( w_{j}\right) _{j\in M})$ and $\phi ^{j}=P$
for each $j\in M.$

\item $UWI\left( N,M,t\right) =PP\left( N,K,c,E\right) .$
\end{itemize}
\end{theorem}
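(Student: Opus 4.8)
The plan is to mirror the structure of the proof of Theorem~\ref{th prob index}, adapting the argument from the probabilistic/\textit{CEA}-first setting to the weighted/positivity-first setting. The key structural difference is that here the first stage uses a weighted proportional rule $P^{w}$ rather than $CEA$, so the amounts assigned to the issues in the first stage will no longer all equal $1$; instead issue $j$ receives a share proportional to $w_{j}C^{j}=w_{j}T^{j}$. The second stage then splits each such amount proportionally among the artists streamed by that user.

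\medskip

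For the first item, I would fix $i\in N$ and compute $R_{i}^{P^{w},P}\left( N,K,c,E\right)$ directly. First stage: for each $j\in K$,
\begin{equation*}
P_{j}^{w}(K,c^{K},E)=\frac{w_{j}T^{j}}{\sum\limits_{j^{\prime }\in K}w_{j^{\prime }}T^{j^{\prime }}}\,m.
\end{equation*}
Second stage: applying the proportional rule to issue $j$'s amount,
\begin{equation*}
P_{i}\bigl(N,c_{.j},P_{j}^{w}(K,c^{K},E)\bigr)=\frac{t_{ij}}{T^{j}}\cdot\frac{w_{j}T^{j}}{\sum\limits_{j^{\prime }\in K}w_{j^{\prime }}T^{j^{\prime }}}\,m=\frac{w_{j}t_{ij}}{\sum\limits_{j^{\prime }\in K}w_{j^{\prime }}T^{j^{\prime }}}\,m.
\end{equation*}
Summing over $j$ gives $R_{i}^{P^{w},P}=\frac{\sum_{j\in M}w_{j}t_{ij}}{\sum_{j^{\prime }}w_{j^{\prime }}T^{j^{\prime }}}\,m$. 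On the other side, $I_{i}^{w}\left( N,M,t\right)=\sum_{j\in M}w_{j}t_{ij}$ (since $I^{w}$ is user-weighted), so the reward $R^{I^{w}}$ normalizes this by $\sum_{k\in N}I_{k}^{w}=\sum_{k\in N}\sum_{j}w_{j}t_{kj}=\sum_{j}w_{j}T^{j}$ and multiplies by $m$, yielding exactly the same expression. This establishes the first item, and since $P^{w}$ satisfies positivity (each $w_{j}>0$), it also proves the inclusion $UWI\subseteq PP$.

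\medskip

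For the reverse inclusion $PP\subseteq UWI$, I would take $x=R^{\psi ,\phi }$ with $\psi$ satisfying positivity and $\phi^{j}=P$, and construct a user-weighted index reproducing it. The natural candidate is to read off the weights from the first-stage allocation: set $w_{j}=\psi_{j}(K,c^{K},E)/(T^{j}\cdot m)$, or more cleanly define the weight so that the induced first-stage proportional split matches $\psi$. The second stage is already $P$ in both constructions, so once the first-stage issue-amounts agree, the artist-level outcomes agree. The subtlety here is that the weighted proportional rule $P^{w}$ can only reproduce first-stage allocations that are themselves proportional to some positive reweighting of the $T^{j}$; positivity of $\psi$ guarantees each issue gets a strictly positive amount (recall each $T^{j}>0$), which is exactly what is needed for the weights $w_{j}$ to be well-defined and strictly positive. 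I expect this matching step to be the main obstacle: I must verify that defining $w_{j}$ from $\psi$ genuinely yields a \emph{user-weighted} index (weights independent of the streaming profile $x$), and that the resulting $I^{w}$ reproduces $x$ after the reward normalization, tracking carefully how the factor $m$ and the normalizing denominator interact so that the first-stage split induced by $P^{w}$ coincides with $\psi_{j}(K,c^{K},E)$ for every $j$.
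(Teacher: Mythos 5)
Your proposal is correct and follows essentially the same route as the paper: the first item is the identical two-stage computation, and for the reverse inclusion you construct the same weight system $w_{j}=\psi _{j}\left( K,c^{K},E\right) /T^{j}$ (your extra factor $1/m$ is harmless since rewards are invariant under positive scaling of the index), with positivity of $\psi $ and $T^{j}>0$ guaranteeing $w_{j}>0$. The ``main obstacle'' you flag resolves exactly as you anticipate, via budget balance $\sum_{j\in K}\psi _{j}\left( K,c^{K},E\right) =E=m$, which makes the first-stage split of $P^{w}$ coincide with $\psi $; the paper carries out the same cancellation by directly recomputing $R^{I^{w}}$.
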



\begin{proof}
For the first item, given $i\in N,$ we compute $R_{i}^{\psi ,\phi }\left(
N,K,c,E\right) $ where $\psi =P^{w}$ and for each $j\in M,$ $\phi ^{j}=P.$

First stage. For each $j\in K,$ 
\begin{equation*}
P_{j}^{w}(K,c^{K},E)=\frac{w_{j}T^{j}}{\sum\limits_{j^{\prime }\in
K}w_{j^{\prime }}T^{j^{\prime }}}m.
\end{equation*}

Second stage. 
\begin{equation*}
P_{i}^{j}(N,c_{.j},P_{j}^{w}(K,c^{K},E))=\frac{t_{ij}}{T^{j}}\frac{w_{j}T^{j}%
}{\sum\limits_{j^{\prime }\in K}w_{j^{\prime }}T^{j^{\prime }}}m.
\end{equation*}

Thus, for each $i\in N$, 
\begin{equation*}
R_{i}^{\psi ,\phi }\left( N,K,c,E\right) =\sum_{j\in
K}P_{i}^{j}(N,c_{.j},P_{j}^{w}(K,c^{K},E))=\frac{\sum\limits_{j\in
M}w_{j}t_{ij}}{\sum\limits_{j\in M}w_{j}T^{j}}m.
\end{equation*}

Now, 
\begin{eqnarray*}
\sum\limits_{j\in M}w_{j}t_{ij} &=&\sum\limits_{j\in M}w\left(
j,t_{.j}\right) t_{ij},\text{ and} \\
\sum\limits_{j\in M}w_{j}T^{j} &=&\sum\limits_{j\in M}w_{j}\left( \sum_{i\in
N}t_{ij}\right) =\sum_{i\in N}\left( \sum\limits_{j\in M}w_{j}t_{ij}\right)
=\sum_{i\in N}\left( \sum\limits_{j\in M}w\left( j,t_{.j}\right)
t_{ij}\right).
\end{eqnarray*}

We thus deduce that $R_{i}^{\psi ,\phi }\left( N,K,c,E\right)
=R_{i}^{I^{w}}\left( N,M,t\right) $, which proves the first item, as well as
one implication of the second item, i.e., $UWI\left( N,M,t\right) \subseteq
PP\left( N,K,c,E\right) .$ 

We now prove the other implication, i.e., that $PP\left( N,K,c,E\right)
\subseteq UWI\left( N,M,t\right) $. Let $x\in \mathbb{R}^{n}$ be such that $%
x=R^{\psi ,\phi }\left( N,K,c,E\right) $ where $\psi $ is a rule satisfying
positivity, and for each $j\in K,$ $\phi ^{j}=P.$ We now define the weight
system $\omega :\mathbb{M}\times \mathbb{Z}_{+}^{n}\rightarrow \mathbb{R}$
as follows. Given $j\in \mathbb{M}$ and $y\in \mathbb{Z}_{+}^{n},$ we define 
\begin{equation*}
w\left( j,y\right) =\frac{\psi _{j}\left( M,\left( T^{j}\right) _{j\in
M},m\right) }{T^{j}}.
\end{equation*}

As $\psi $ is a rule satisfying positivity and $T^{j}>0$ for all $j\in M,$
we have that $w\left( j,y\right) >0.$ As $w$ does not depend on $y$, we
conclude that $w$ is a user-weighted system. 

Let $i\in N$. Then,

\begin{eqnarray*}
R_{i}^{I^{w}}\left( N,M,t\right) &=&\frac{\sum\limits_{j\in M}w\left(
j,t_{.j}\right) t_{ij}}{\sum\limits_{i\in N}\left( \sum\limits_{j\in
M}w\left( j,t_{.j}\right) t_{ij}\right) }m=\frac{\sum\limits_{j\in M}\frac{%
\psi _{j}\left( M,\left( T^{j}\right) _{j\in M},m\right) }{T^{j}}t_{ij}}{%
\sum\limits_{i\in N}\left( \sum\limits_{j\in M}\frac{\psi _{j}\left(
M,\left( T^{j}\right) _{j\in M},m\right) }{T^{j}}t_{ij}\right) }m \\
&=&\sum\limits_{j\in M}\frac{\frac{\psi _{j}\left( M,\left( T^{j}\right)
_{j\in M},m\right) }{T^{j}}t_{ij}}{\sum\limits_{j\in M}\frac{\psi _{j}\left(
M,\left( T^{j}\right) _{j\in M},m\right) }{T^{j}}\left( \sum\limits_{i\in
N}t_{ij}\right) }m=\sum\limits_{j\in M}\frac{\psi _{j}\left( M,\left(
T^{j}\right) _{j\in M},m\right) }{\sum\limits_{j\in M}\psi _{j}\left(
M,\left( T^{j}\right) _{j\in M},m\right) }\frac{t_{ij}}{T^{j}}m \\
&=&\sum\limits_{j\in M}\psi _{j}\left( M,\left( T^{j}\right) _{j\in
M},m\right) \frac{t_{ij}}{T^{j}}=\sum\limits_{j\in M}\psi _{j}\left(
K,c^{K},E\right) \frac{c_{ij}}{c_{j}^{K}} \\
&=&\sum\limits_{j\in K}P_{i}^{j}\left( N,c_{.j},\psi _{j}\left(
K,c^{K},E\right) \right) =R_{i}^{\psi ,\phi }\left( N,K,c,E\right) =x_{i}.
\end{eqnarray*}
\end{proof}

\bigskip

The second item of Theorem \ref{th user weigh index} says that the family of
user-weighted indices induces the same rewards as the family of two-stage
rules where we use the proportional rule in the second stage (any claims
rule satisfying positivity can be used for the first stage).

\bigskip

We conclude considering another subfamily of weighted indices, arising when
the weight depends on the user and the total number of streams, but not on
how the total streams are divided among artists. Formally, given a weighted
index $I^{w}$ we say that $I^{w}$ is a \textbf{total-streams-weighted index}
if for all $x,$ $x^{\prime }\in $ $\mathbb{Z}_{+}^{n}$ and all $j\in M$ such
that $\sum\limits_{i\in N}x_{i}=\sum\limits_{i\in N}x_{i}^{\prime }$, $%
w\left( j,x\right) =w\left( j,x^{\prime }\right) .$ 
We show that these indices can be seen as specific (multi-issue) weighted
proportional rules, rather than two-stage rules.


\begin{theorem}
\label{th total str weig index}Let $\left( N,M,t\right)\in\mathcal{P} $ be a
streaming problem and $\left( N,K,c,E\right)\in\mathcal{MC} $ be the
associated (multi-issue) claims problem. Then, for each
total-streams-weighted index $I^{\omega },$ 
\begin{equation*}
R^{I^{w}}\left( N,M,t\right) =P^{w}\left( N,K,c,E\right),
\end{equation*}
where 
there exists a mapping $w^{\ast }:\mathbb{M}\times \mathbb{Z}\rightarrow 
\mathbb{R}$ such that $w^{\ast }\left( j,x\right) >0$ for each $\left(
j,x\right) \in \mathbb{M}\times \mathbb{Z}$ and $w_{j}\left( \left(
C^{j}\right) _{j\in K},E\right) =\frac{w^{\ast }\left( j,C^{j}\right) C^{j}}{%
\sum\limits_{k\in K}w^{\ast }\left( k,C^{k}\right) C^{k}}$ for each $j\in M$.
\end{theorem}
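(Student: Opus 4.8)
The plan is to show that a total-streams-weighted index, whose reward formula involves weights $w(j,t_{.j})$ depending only on the user and the \emph{total} number of streams $T^j$, can be rewritten as a multi-issue weighted proportional rule $P^{w}(N,K,c,E)$. The key observation is that for a total-streams-weighted index, the weight $w(j,t_{.j})$ equals $w(j,T^{j})$ in the sense that it is constant across all streaming profiles with the same total; I will exploit the fact that $T^{j}=C^{j}$ under the association between streaming and multi-issue claims problems, so the user-dependent weight can legitimately be expressed as a function of $(C^{j})_{j\in K}$ and $E$, which is exactly what a weight function in the multi-issue setting may depend on.

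First I would write out $R_{i}^{I^{w}}(N,M,t)$ using the reward formula for indices, namely $R_{i}^{I^{w}}(N,M,t)=\frac{I_{i}^{w}(N,M,t)}{\sum_{k\in N}I_{k}^{w}(N,M,t)}m$, and expand $I_{i}^{w}(N,M,t)=\sum_{j\in M}w(j,t_{.j})t_{ij}$. Because the index is total-streams-weighted, I may replace $w(j,t_{.j})$ by a quantity depending only on $(j,T^{j})$; I would name this $w^{\ast}(j,C^{j})$, using $C^{j}=T^{j}$. Next I would compute the target expression $P_{i}^{w}(N,K,c,E)=\sum_{j\in K}\frac{c_{ij}}{C^{j}}\,\omega_{j}((C^{j})_{j\in K},E)\,E$ with the candidate weight function $\omega_{j}((C^{j})_{j\in K},E)=\frac{w^{\ast}(j,C^{j})C^{j}}{\sum_{k\in K}w^{\ast}(k,C^{k})C^{k}}$ specified in the statement, and verify it is a genuine weight function (nonnegative, summing to one over $j\in K$), which is immediate from $w^{\ast}>0$.

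The heart of the proof is then a direct algebraic identity. Substituting the candidate $\omega_{j}$ into $P_{i}^{w}$ and using $E=m$ gives
\begin{equation*}
P_{i}^{w}(N,K,c,E)=\sum_{j\in M}\frac{t_{ij}}{T^{j}}\cdot\frac{w^{\ast}(j,T^{j})T^{j}}{\sum_{k\in M}w^{\ast}(k,T^{k})T^{k}}\,m=\frac{\sum_{j\in M}w^{\ast}(j,T^{j})t_{ij}}{\sum_{k\in M}w^{\ast}(k,T^{k})T^{k}}\,m.
\end{equation*}
On the other side, I would simplify the reward from the index: the numerator is $\sum_{j\in M}w^{\ast}(j,T^{j})t_{ij}$ and the denominator $\sum_{k\in N}\sum_{j\in M}w^{\ast}(j,T^{j})t_{kj}=\sum_{j\in M}w^{\ast}(j,T^{j})T^{j}$ after interchanging the order of summation and using $\sum_{k\in N}t_{kj}=T^{j}$. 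The two expressions match term by term, establishing $R^{I^{w}}(N,M,t)=P^{w}(N,K,c,E)$.

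The main obstacle, and the step requiring the most care, is the legitimacy of defining $\omega_{j}$ as a function of $((C^{j})_{j\in K},E)$ alone: a priori $w(j,t_{.j})$ could depend on the whole profile $t_{.j}$, but the total-streams-weighted hypothesis is precisely what collapses this dependence to the scalar $T^{j}$, allowing the extraction of $w^{\ast}(j,T^{j})$. I would make explicit that for any two profiles with equal column sums the weight agrees, so $w^{\ast}(j,\cdot):\mathbb{M}\times\mathbb{Z}\to\mathbb{R}$ with $w^{\ast}>0$ is well-defined, and that the resulting $\omega_{j}$ indeed lies in the admissible domain $\mathbb{R}_{++}^{K}\times\mathbb{R}_{++}$ required by the multi-issue weighted proportional rule. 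The remaining computation is the routine summation-swap displayed above.
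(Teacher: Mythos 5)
Your proposal is correct and follows essentially the same route as the paper's proof: extract a well-defined map $w^{\ast}(j,\cdot)$ from the total-streams-weighted hypothesis, plug the induced weight function $\omega_{j}=\frac{w^{\ast}(j,C^{j})C^{j}}{\sum_{k}w^{\ast}(k,C^{k})C^{k}}$ into $P^{w}$, and match both sides after the summation swap $\sum_{k\in N}\sum_{j\in M}w^{\ast}(j,T^{j})t_{kj}=\sum_{j\in M}w^{\ast}(j,T^{j})T^{j}$. The only cosmetic difference is that the paper defines $w^{\ast}(j,x)$ via the canonical representative profile concentrating all $x$ streams on artist $1$, whereas you define it as the common value over all profiles with total $x$; these are equivalent given the hypothesis.
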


\bigskip

\begin{proof}
Given $i\in N,$ 
\begin{eqnarray*}
R_{i}^{I^{w}}\left( N,M,t\right) &=&\frac{\sum\limits_{j\in M}\omega \left(
j,t_{.j}\right) t_{ij}}{\sum\limits_{i^{\prime }\in N}\sum\limits_{j\in
M}\omega \left( j,t_{.j}\right) t_{i^{\prime }j}}m=\frac{\sum\limits_{j\in
M}\omega \left( j,t_{.j}\right) t_{ij}}{\sum\limits_{j\in M}\omega \left(
j,t_{.j}\right) \sum\limits_{i^{\prime }\in N}t_{i^{\prime }k}}m \\
&=&\frac{\sum\limits_{j\in M}\omega \left( j,t_{.j}\right) t_{ij}}{%
\sum\limits_{j\in M}\omega \left( j,t_{.j}\right) T^{j}}m.
\end{eqnarray*}

We define $w^{\ast }:\mathbb{M}\times \mathbb{Z}\rightarrow \mathbb{R}$ as
follows. Given $\left( j,x\right) \in \mathbb{M}\times \mathbb{Z}$, let $%
t_{.j}^{\ast }$ be such that $t_{1j}^{\ast }=x$ and $t_{ij}^{\ast }=0$ when $%
i\neq 1.$ We then define $w^{\ast }\left( j,x\right) =w\left( j,t_{.j}^{\ast
}\right) .$
Let $i\in N$. Then, 
\begin{eqnarray*}
P_{i}^{w}\left( N,K,c,E\right) &=&\sum_{j\in K}\frac{c_{ij}}{C^{j}}\omega
_{j}(\left( C^{j}\right) _{j\in K},E)E=\sum_{j\in M}\frac{t_{ij}}{T^{j}}%
\omega _{j}(\left( T^{j}\right) _{j\in K},m)m \\
&=&\sum_{j\in M}\frac{t_{ij}}{T^{j}}\frac{\omega ^{\ast }\left(
j,T^{j}\right) T^{j}}{\sum\limits_{j\in M}\omega ^{\ast }\left(
j,T^{j}\right) T^{j}}m=\frac{\sum\limits_{j\in M}\omega ^{\ast }\left(
j,t_{.j}^{\ast }\right) t_{ij}}{\sum\limits_{j\in M}\omega ^{\ast }\left(
j,T^{j}\right) T^{j}}m \\
&=&\frac{\sum\limits_{j\in M}\omega \left( j,t_{.j}\right) t_{ij}}{%
\sum\limits_{j\in M}\omega \left( j,T^{j}\right) T^{j}}m,
\end{eqnarray*}%
where the last equality holds because for each $j\in M,$ $\sum\limits_{i\in
N}t_{ij}^{\ast }=\sum\limits_{i\in N}t_{ij}=T^{j}.$
\end{proof}


\section{Conclusions}

We have explored in this paper the connection between the emerging
literature on streaming problems and the well-established literature on
(multi-issue) claims problems. We have seen that the most well-known indices
to solve streaming problems, and families generalizing them, can be seen as
specific rules to solve (multi-issue) claims problems. Our work thus paves
the road to consider various known rules for claims problems to solve
streaming problems (in new ways that have not been considered so far).
Similarly, it opens the possibility to obtain new axiomatic
characterizations for the  pro-rata index (which can also be seen as the
two-stage proportional rule), the user-centric index (which can also be seen
as the CEA-proportional two-stage rule) or the Shapley index (which can also
be seen as the two-stage CEA rule) in streaming problems, inspired by
existing characterizations for claims rules. Finally, it also suggests to
consider well-known families of rules for claims problems (e.g., Berganti%
\~{n}os and Vidal-Puga, 2004; Moreno-Ternero, 2011; van den Brink and
Moreno-Ternero, 2017) to solve streaming problems.

We conclude mentioning that the game-theoretical approach is another
indirect way of solving streaming problems. There is a long tradition in economics and operations research of analyzing problems involving agents' cooperation with a game-theoretical approach.
Classical instances are airport problems (e.g., Littlechild and Owen, 1973), bankruptcy problems from the Talmud (e.g., Aumann and Maschler, 1985) or cost-allocation problems (e.g., Tijs and Driessen,
1986). More recent instances are the allocation of benefits of horizontal cooperation (e.g., Lozano et
al., 2013), the sharing of fish aggregating devices (e.g., Berganti\~{n}os et al., 2023) or even explainable artificial intelligence (e.g., Borgonovo et al., 2024). In our case here, this (game-theoretical) approach amounts to associate each streaming problem with a cooperative TU-game (rather than with a claims problems) in which the worth of each coalition of artists is determined by the
amount paid by users that are only streaming those artists. The standard option is to
take a pessimistic stance in which only users that exclusively streamed
artists in a given coalition will count for the worth (e.g., Berganti\~{n}os
and Moreno-Ternero, 2025). 
Alternatively, one could consider an optimistic stance, in which all users
that streamed some artist in the coalition count (e.g., Berganti\~{n}os and
Moreno-Ternero, 2024). It turns out that the Shapley value of both
(pessimistic and optimistic) games would coincide, giving rise to the
Shapley index considered in this paper (e.g., Berganti\~{n}os and
Moreno-Ternero, 2024). Schlicher et al. (2024) associate another cooperative
game to a streaming problem and study the core. 


\newpage


\end{document}